\documentclass{article}[12pt]
\RequirePackage{amsthm,amsmath,amsfonts,amssymb}
\RequirePackage[authoryear]{natbib}
\RequirePackage[colorlinks,citecolor=blue,urlcolor=blue]{hyperref}
\usepackage[margin=1in]{geometry}
\usepackage{setspace}
\theoremstyle{plain}
\newtheorem{assumption}{Assumption}
\newtheorem{theorem}{Theorem}

\newtheorem{lemma}{Lemma}

\usepackage{graphicx,psfrag,epsf}
\usepackage{url}
\usepackage{amsthm,amsmath,amsfonts,amssymb}
\usepackage{bm}
\usepackage{array}
\usepackage{diagbox}
\usepackage{lscape}
\usepackage{multirow}
\usepackage{threeparttable}
\usepackage{enumitem}
\usepackage{tikz}
\usepackage{wrapfig}
\usepackage{multirow}

\usepackage{venndiagram}
\newcommand{\probP}{\text{I\kern-.23em P}}

\usepackage{makecell}
\newcolumntype{P}[1]{>{\centering\arraybackslash}p{#1}}
\usepackage{afterpage}

\begin{document}
\begin{center} {\bf\Large Improving the Power of Bonferroni Adjustments \\under Joint Normality and Exchangeability }
\end{center}
\centerline{\textsc{Caleb Hiltunen and Yeonwoo Rho\footnote{emails: C. Hiltunen (cahiltun@mtu.edu), Y. Rho (yrho@mtu.edu)}}}		
		\bigskip
	\centerline {Michigan Technological University}
	\bigskip
\begin{abstract}
Bonferroni's correction is a popular tool to address multiplicity but is notorious for its low power when tests are dependent. This paper proposes a practical modification of Bonferroni's correction when test statistics are jointly normal and exchangeable. This method is intuitive to practitioners and achieves higher power in sparse alternatives, as our simulations suggest. We also prove that this method successfully controls the family-wise error rate at any significance level. \\
		\textit{key words}:  Bonferroni's correction, Multiple comparisons, Exchangeable, Jointly normal test statistics.
\end{abstract}
\section{Introduction}\label{section:intro}
Multiple comparison correction procedures are becoming increasingly important with the prevalence of large datasets. When it comes to controlling family-wise error rate (FWER), Bonferroni's correction \citep{bonf36}, for its ease of use, remains one of the popular methods among practitioners \citep{polanin14}, despite the existence of alternative approaches such as \cite{tippett31,fisher32,pearson33,stouffer49,wilkinson51}. However, Bonferroni is also well-known for its conservatism, particularly when the number of test statistics becomes very large or when tests are dependent \citep{chen17}. Currently,  other methods exist which are not as conservative as Bonferroni in the face of dependence, such as \cite{holm79}, \cite{benjamini95}, \cite{proshawn11}, \cite{wilson19}, and \cite{liu20}.

In this paper, we modify Bonferroni's correction to improve its power under dependence. 
We limit our scope to jointly normal test statistics that are invariant to permutations, i.e., exchangeable. We also assume test statistics are standardized with variance 1. These assumptions are not too uncommon in the $p$-value combination literature  \citep{gupta73, tian21, choi23, gasparin2025}, as many null distributions of test statistics are standard normal, and exchangeability is the simplest assumption for dependence when there are no specific orders among the test statistics. 
By limiting its scope, our method keeps the intuitive nature of Bonferroni's and achieves a significant improvement in power, particularly under sparse alternatives, without assuming the knowledge of any unknown parameters.

Our method is anchored on \cite{gupta73}, where a modification of Bonferroni's is proposed under joint normality and exchangeability. However, their method requires information on the common pairwise correlation of $\rho$, which, in practice, is unknown. Estimation of $\rho$ has been somewhat dismissed in the $p$-value combination literature, particularly when there is only one set of test statistics, perhaps because the absence of replication typically prevents consistent estimation via the weak law of large numbers. 

We propose to estimate $\rho$ using the sample variance. This is possible because one set of positively correlated test statistics under the null would appear to be clustered around a single random number. The magnitude of the correlation determines the tightness of the cluster, which can be quantified by the sample variance.
In fact, this idea is not entirely new. \cite{follman13} proposed this estimator of $\rho$ in a location test with joint normal data with exchangeability. They developed a likelihood ratio test in this setting, demonstrating a use as O'Brien's test \citep{o1984procedures}. However, \cite{follman13}'s approach has received little attention in the p-value combination literature, and the idea of estimating $\rho$ via sample variance has not been applied to \cite{gupta73}'s method.

This paper showcases the use of  $\rho$ estimation with \cite{gupta73}'s modification of Bonferroni's correction. This method boasts higher power under sparse alternatives, compared to the O'Brien's test in \cite{follman13}. It can also be a practical alternative to recently developed $p$-value combination methods based on heavy-tails \citep{wilson19,liu20} when larger significance level is desired. As a Bonferroni-type procedure, our method allows easier identification of individual significant tests following rejection of the global null, compared to the sum-based tests such as O'Brien's and \cite{wilson19,liu20}.

This paper is organized as follows. Section \ref{alternatives} introduces our method along with the theorems that prove its validity, and Section \ref{section:sim} explores the performance in finite samples. Section \ref{section:conclusion} provides concluding remarks. Proofs are relegated to Appendix \ref{app}. Appendix  \ref{app:table} presents size tables.

\section{Method}\label{alternatives}
Let $X = \left(X_1,...,X_n\right)$ be a sequence of jointly normal, exchangeable, and standardized test statistics with mean $\bm{\mu} \in \mathbb{R}^n$ and positive semi-definite covariance matrix $\Sigma_{\rho} \in \mathbb{R}^n \times \mathbb{R}^n$, where $\rho \in [0,1)$ indicates the common correlation. Consider the following hypothesis test,
$$H_0:\bm{\mu} = \bm{0} \hspace{3cm} H_a:\bm{\mu} \neq \bm{0},$$
where $\bm{0}$ is the zero vector of size $n$. While \cite{gupta73} develop their method for one-sided alternatives, we extend this framework to accommodate two-sided tests. For brevity, we focus our description and simulations on the two-sided case, as the procedures for one-sided tests are similar. We provide Theorem \ref{thm2} detailing how our method can be adapted for one-sided alternatives.

When normal variables are exchangeable, they can be expressed as
\begin{equation*}
    X_i = \sqrt{\rho}Z_0+\sqrt{1-\rho}Z_i \hspace{2cm} i=1,2,...,n,
\end{equation*}
\noindent
where $Z_i$ are $i.i.d$ variates pulled from standard normal distributions for $i=0,1,...,n$. With this representation, we will tweak the method described in \cite{gupta73} for a two-sided alternative using similar methods found in \cite{dunnet55}. We combine our test statistics into a global test statistic,
$M_n\left( \left |X \right| \right) = \max_{1\leq i \leq n}(|X_i|)$, like in the Bonferroni case. We aim for an exact\footnote{Exact control meaning that FWER = $\alpha$ as opposed to strict control meaning FWER $\leq \alpha$.} control of the FWER at some level $\alpha$, so we would then find the critical value $c_{\alpha}$ such that 
\begin{equation*}\label{eq1}
    \begin{aligned}
        P(M_n(|X|)\leq c_\alpha) &= P \left (\max_{i}\left(\left|\sqrt{\rho}Z+\sqrt{1-\rho}Z_i\right| \right) \leq c_\alpha\right) \\
        &=E_Z \left[ P\left( \max_i \left(\left| \sqrt{\rho}Z+\sqrt{1-\rho}Z_i \right|\right) \leq c_\alpha |Z=z\right)\right]\\
        &= \int_{-\infty}^{\infty}\Psi_{\mu = \sqrt{\rho}z, \sigma = \sqrt{1-\rho}}^n\left(c_{\alpha}\right) \phi(z)dz=1-\alpha,
    \end{aligned}
\end{equation*}
where $\Psi_{\mu, \sigma}(\cdot)$ indicates the folded-normal distribution with an underlying normal variable of mean $\mu$ and standard deviation $\sigma$, and $\phi(\cdot)$ indicates the standard normal density function. 
Here, we can then obtain the critical value $c_{\alpha}$ through numerical integration. The global null $H_0$ is rejected when $M_n(|X|) \geq c_{\alpha}$.  Alternatively, the $p$-value, $p_0$, for this global test statistic, $M_n(|X|)$, can be calculated: the global null is rejected when $p_0 \leq \alpha$, where
\begin{equation}\label{gnpa}\tag{GNP*}
p_0(\rho)=1-\int_{-\infty}^{\infty}\Psi_{\mu = \sqrt{\rho}z, \sigma = \sqrt{1-\rho}}^n\left(M_n(|X|)\right) \phi(z)dz.
\end{equation}

\cite{gupta73} provides a table of critical values $c_\alpha$ for the one-sided alternative, assuming $\rho$ is known. However, in practice, $\rho$ is often unknown, making it impossible to use \cite{gupta73}'s table. In this paper, we propose to estimate $\rho$ using \cite{follman13}'s method-of-moments style estimator\footnote{It should be noted that \cite{follman13} define their estimator as $\widehat{\rho} = \max(0, 1-s_n^2)$, which is equivalent to \ref{MOM}. As our framework was developed prior to our awareness of \cite{follman13}, we keep \ref{MOM} in this form to ensure consistency with our theorems and proofs.}:
\begin{equation*}\label{MOM}\tag{MOM}
    \widehat{\rho}_{MOM}=(1-s_n^2)I(s_n^2<1).
\end{equation*}
Here, $s_n^2$ indicates the sample variance of $X$,
$$s_n^2=\frac{1}{n-1}X'\left(I_n-\frac{1}{n}J_n\right)X,$$
where $I_n$ is identity matrix of size $n$ and $J_n$ is the matrix of ones of size $n$. While \cite{follman13} also provide a maximum-likelihood estimator (MLE) of $\rho$, this paper will only consider \ref{MOM}, since the two estimators are asymptotically equivalent, as \cite{follman13} point out. While \ref{gnpa}-\ref{MOM} require $n \rightarrow \infty$, our simulation suggest that $n=20$ is a good benchmark for arbitrary $\alpha$ and $\rho$ such that $\alpha$ is controlled. See Table \ref{table:sizetable} in Appendix \ref{app:table}.

We shall prove that \ref{gnpa}-\ref{MOM} controls FWER. In fact,  any other estimator of $\rho$ that satisfies Assumption \ref{assumption:rho} will work with \ref{gnpa}.
\begin{assumption}\label{assumption:rho}
$\sqrt{2\ln(n)} \left( \sqrt{1-\rho} - \sqrt{1-\widehat{\rho}} \right) \xrightarrow[n\rightarrow \infty]{P}0.$ 
\end{assumption}

\noindent
Theorem \ref{thm1} proves that $\widehat{\rho}_{MOM}$ satisfies Assumption \ref{assumption:rho}.
\begin{theorem}\label{thm1}
Consider a sequence of jointly standard normal random variables $X=(X_1,...,X_n)'$ where
$$E[X_i]=0,\hspace{2cm}E[X_i^2]=1,\hspace{2cm} Corr(X_i,X_j)=\rho$$
for $1\leq i \leq n, 1\leq j \leq n, i\neq j, $ and $0<\rho <1$.
Then, with $\widehat{\rho}$ defined as in \ref{MOM}, we have
$$ \sqrt{a_n} \left( \sqrt{1-\rho} - \sqrt{1-\widehat{\rho}} \right) \xrightarrow[n\rightarrow \infty]{P}0,$$
where $a_n=o(n)$.
\end{theorem}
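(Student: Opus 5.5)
The plan is to exploit the representation $X_i=\sqrt{\rho}Z_0+\sqrt{1-\rho}Z_i$. The common factor $Z_0$ cancels in the centred quantities, so
$$s_n^2=(1-\rho)\,\tilde s_n^2,\qquad \tilde s_n^2=\frac{1}{n-1}\sum_{i=1}^n (Z_i-\bar Z)^2 .$$
Here $\tilde s_n^2$ is the sample variance of $n$ i.i.d.\ standard normals, so $(n-1)\tilde s_n^2\sim\chi^2_{n-1}$. This gives
$$E\tilde s_n^2=1,\qquad \operatorname{Var}(\tilde s_n^2)=\frac{2}{n-1}.$$
It follows that $\sqrt{n}(\tilde s_n^2-1)=O_P(1)$, by Chebyshev or by the CLT for $\chi^2$ variables. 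In particular $s_n^2\to 1-\rho<1$ in probability, because $\rho>0$.

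Next I remove the indicator. On the event $A_n=\{s_n^2<1\}$ we have $\widehat\rho=1-s_n^2$, hence $\sqrt{1-\widehat\rho}=s_n=\sqrt{1-\rho}\,\tilde s_n$. We have $P(A_n^c)=P(\tilde s_n^2\ge 1/(1-\rho))\to 0$, since $1/(1-\rho)>1$ and $\tilde s_n^2\to1$ in probability. Because the target is convergence in probability, it suffices to handle the quantity on $A_n$. There it equals
$$\sqrt{a_n}\,\sqrt{1-\rho}\,(1-\tilde s_n).$$

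The key step is the bound
$$|1-\tilde s_n|=\frac{|1-\tilde s_n^2|}{1+\tilde s_n}\le |1-\tilde s_n^2| .$$
Combining this with the previous display gives
$$\sqrt{a_n}\,|1-\tilde s_n|\le \sqrt{a_n/n}\cdot\sqrt{n}\,|\tilde s_n^2-1| .$$
The first factor tends to $0$ because $a_n=o(n)$. The second factor is $O_P(1)$; explicitly, Chebyshev gives
$$P\bigl(\sqrt{a_n}|1-\tilde s_n|>\varepsilon\bigr)\le \frac{2a_n}{(n-1)\varepsilon^2}\to 0 .$$
Adding $P(A_n^c)\to0$ completes the proof.

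I do not expect a serious obstacle. The only step needing care is the event where the indicator in \ref{MOM} is zero. It is handled by $\rho>0$, which makes $1-\rho$ strictly less than $1$ and thus keeps $s_n^2$ below $1$ with probability tending to one. Applying the result with $a_n=2\ln n=o(n)$ then yields Assumption \ref{assumption:rho} for $\widehat\rho_{MOM}$.
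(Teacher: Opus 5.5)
Your proof is correct and shares the paper's probabilistic core: the $\chi^2_{n-1}$ law of the scaled sample variance, Chebyshev with variance $2/(n-1)$, and $\rho>0$ forcing $P(s_n^2\ge 1)\to 0$. Where you differ is in how you pass from the $\rho$-scale to the square-root scale. The paper Taylor-expands $\sqrt{x}$ about $1-\rho$ and reduces the claim to $\sqrt{a_n}(\widehat\rho-\rho)\to 0$ in probability. It then splits $\widehat\rho-\rho$ algebraically around the indicator and combines Lemmas 1--2 via Slutsky. It still has to argue that the series remainder $R_m$ is of higher order, which it handles only informally; the series converges only when $|\widehat\rho-\rho|<1-\rho$, i.e.\ only with probability tending to one. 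You instead restrict to the event $A_n$, where $\sqrt{1-\widehat\rho}=\sqrt{1-\rho}\,\tilde s_n$ holds exactly, and use the conjugate identity $1-\tilde s_n=(1-\tilde s_n^2)/(1+\tilde s_n)$. This is a deterministic inequality with no remainder, and it gives an explicit bound of the form $P(A_n^c)+2a_n/((n-1)\varepsilon^2)$. The paper's delta-method layout has the merit of isolating the $\sqrt{n}$-rate of $\widehat\rho$ as a reusable lemma, but your device loses nothing there. Written as $|\sqrt{1-\rho}-\sqrt{1-\widehat\rho}|\le|\widehat\rho-\rho|/\sqrt{1-\rho}$, valid for any $\widehat\rho\le 1$, it transfers any root-$n$ rate for an estimator (the MLE, for instance) directly, with no Taylor remainder to control. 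Your factor representation $s_n^2=(1-\rho)\tilde s_n^2$ also makes the $\chi^2_{n-1}$ claim immediate, whereas the paper asserts it after computing only the mean.
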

Since $\ln(n) = o(n)$, Theorem \ref{thm1} implies that \ref{MOM} satisfies Assumption \ref{assumption:rho}. Theorem \ref{thm2} proves that \ref{gnpa}-\ref{MOM} controls the FWER.

\begin{theorem}\label{thm2}
Consider a sequence of jointly standard normal random variables $X=(X_1,...,X_n)'$ where
$$E[X_i]=0,\hspace{2cm}E[X_i^2]=1,\hspace{2cm} Corr(X_i,X_j)=\rho$$
for $1\leq i \leq n, 1\leq j \leq n, i\neq j, $ and $0<\rho <1$.
Then, with $M_n(X)=\max(X_1,...,X_n)$, $|X| = (|X_1|,...,|X_n|)'$, and and an estimator $\widehat{\rho}$ that satisfies Assumption \ref{assumption:rho}, 
\begin{equation*}\label{thm3eq1}
    \left|\int_{-\infty}^{\infty}\left(\Psi_{\mu = \sqrt{\rho}z,\sigma = \sqrt{1-\rho}}^n\left(M_n\left( \left|X \right|\right) \right) -\Psi_{\mu = \sqrt{\widehat{\rho}}z, \sigma = \sqrt{1-\widehat{\rho}}}^n \left( M_n\left( \left|X \right| \right) \right)  \right)\phi(z)dz\right|\xrightarrow[n \rightarrow \infty]{P}0,
\end{equation*}
where $\Psi_{\mu, \sigma}(\cdot)$ is the cumulative distribution function of the folded-normal distribution with location $\mu$ and scale $\sigma$ and $\phi(\cdot)$ is the standard normal density function. Consequently, 
$$p_0(\rho)-p_0(\widehat{\rho}) \xrightarrow[n\rightarrow \infty]{P}0,$$
where $p_0(\cdot)$ is defined in \ref{gnpa}.
\end{theorem}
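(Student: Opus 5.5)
We want to show that replacing $\rho$ by $\widehat\rho$ inside the integral changes it by $o_P(1)$, using only Assumption \ref{assumption:rho}. Write $m=M_n(|X|)$. Substitute the representation $X_i=\sqrt{\rho}Z_0+\sqrt{1-\rho}Z_i$, and write $\sigma=\sqrt{1-\rho}$, $\widehat\sigma=\sqrt{1-\widehat\rho}$. Each integrand is
$$\Psi^n_{\sqrt{\rho}z,\sigma}(m)=\bigl[\Phi((m-\sqrt\rho z)/\sigma)-\Phi((-m-\sqrt\rho z)/\sigma)\bigr]^n .$$
The random level $m$ behaves like $\sqrt{\rho}\,|Z_0|$-shifted maxima: $m=\max_i|\sqrt\rho Z_0+\sigma Z_i|$, which equals $\sigma\sqrt{2\ln n}\,(1+o_P(1))$. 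The factor $\sqrt{2\ln n}$ in Assumption \ref{assumption:rho} is exactly the scale needed to make $m(1/\sigma-1/\widehat\sigma)=o_P(1)$.

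\textbf{Step 1: reduce to a statement about arguments.} Use $|a^n-b^n|\le n|a-b|\max(a,b)^{n-1}$? That bound is too crude, since $n$ multiplies. Instead I would use that $\Psi^n$ is a CDF-type functional of a sum. Write
$$\Psi^n_{\mu,\sigma}(m)=\exp\bigl(n\ln(1-\bar\Psi_{\mu,\sigma}(m))\bigr).$$
The relevant quantity is then $n\bar\Psi_{\sqrt\rho z,\sigma}(m)$, which in the relevant range of $z$ is of order $n\bar\Phi((m-\sqrt\rho z)/\sigma)$. By Mills' ratio, $n\bar\Phi(t)$ at $t\approx\sqrt{2\ln n}$ changes only by a factor $\exp(-t\,\delta(1+o(1)))$ when $t$ is perturbed by $\delta$. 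So it suffices to show that
$$\sup_{|z|\le K}\bigl|(m-\sqrt\rho z)/\sigma-(m-\sqrt{\widehat\rho} z)/\widehat\sigma\bigr|\cdot\sqrt{2\ln n}=o_P(1)$$
for each fixed $K$, plus the analogous statement for the $-m$ term.

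\textbf{Step 2: verify the perturbation is small.} The difference splits into two pieces:
$$m(1/\sigma-1/\widehat\sigma)\quad\text{and}\quad z(\sqrt{\widehat\rho}/\widehat\sigma-\sqrt\rho/\sigma).$$
For the first piece, $m=O_P(\sqrt{\ln n})$ because $m\le|\sqrt\rho Z_0|+\sigma\max_i|Z_i|$. Hence $m\sqrt{2\ln n}\,(\widehat\sigma-\sigma)/(\sigma\widehat\sigma)=O_P(\ln n)\,|\widehat\sigma-\sigma|$.

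Here is the main obstacle. Assumption \ref{assumption:rho} only gives $\sqrt{\ln n}\,|\widehat\sigma-\sigma|\to0$, not $\ln n\,|\widehat\sigma-\sigma|\to0$. So the crude Mills-ratio sensitivity bound loses a $\sqrt{\ln n}$. I would therefore avoid multiplying by $t$. Instead, compare the two integrands as functions of $m$ directly. The map $m\mapsto\int\Psi^n_{\sqrt\rho z,\sigma}(m)\phi(z)dz$ is the CDF $F_\rho$ of $M_n(|X|)$ under $\rho$, and $F_\rho(m)$ is approximately uniform. I would show
$$F_{\widehat\rho}(m)=F_\rho(m\,\sigma/\widehat\sigma+\text{small}).$$
This holds because the bulk of the integral comes from $z$ where $(m-\sqrt\rho z)/\sigma\approx\sqrt{2\ln n}$. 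Substituting $z\mapsto z'$ to absorb the change of $\sqrt\rho$ into a shift of $z$ costs only an $O(|\widehat\rho-\rho|\,|z|)$ perturbation of the $\phi(z)$ weight, which integrates to $o_P(1)$. The problem thus reduces to the uniform continuity of the limiting law of $M_n(|X|)$ under rescaling. Since $M_n(|X|)\approx\sqrt\rho|Z_0|+\sigma\sqrt{2\ln n}$, rescaling by $\sigma/\widehat\sigma$ shifts it by $\sqrt{2\ln n}\,|\sigma-\widehat\sigma|\to0$, exactly Assumption \ref{assumption:rho}. The limit law of $\sqrt\rho\, Z_0$-type terms is continuous, so the CDF values converge in probability.

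\textbf{Step 3: tails and conclusion.} Truncate $|z|>K$, where both integrands lie in $[0,1]$, contributing at most $2\bar\Phi(K)$. Handle the $-m$ terms, which are negligible since $m\to\infty$. Then let $n\to\infty$ followed by $K\to\infty$ to get the displayed convergence. The consequence $p_0(\rho)-p_0(\widehat\rho)\to0$ is immediate from \eqref{gnpa}, because the $p$-values are one minus these integrals. Step 2 is where the argument is most delicate: getting the rate to match exactly $\sqrt{2\ln n}$ rather than $\ln n$ requires the rescaling argument rather than a pointwise derivative bound.
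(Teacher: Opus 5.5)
Your final argument, the rescaling in Step~2, is correct, and it takes a genuinely different route from the paper's.

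\textbf{The paper's route.} It works pointwise in $z$:
\begin{itemize}
\item two mean-value-theorem steps, on $y\mapsto y^n$ and on erf;
\item the assertion $n c_{z,1}^{n-1}/2^n\to0$ because $c_{z,1}\in(-2,2)$;
\item the convergence $g_1,g_2\to0$;
\item dominated convergence.
\end{itemize}
Your Step~2 computation is exactly what this misses. The point $c_{z,1}$ depends on $n$, and where the integrands are non-degenerate, $c_{z,1}/2=1-O(1/n)$. There the derivative of $\Psi^n$ in its standardized argument is of order $\sqrt{\ln n}$, while $g_1$ is of order $\sqrt{\ln n}\,|\sigma-\widehat\sigma|$. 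The bound is therefore of order $\ln n\,|\sigma-\widehat\sigma|$, which Assumption~1 does not control.

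This also means your dismissal of $|a^n-b^n|\le n|a-b|\max(a,b)^{n-1}$ as ``too crude'' misplaces the problem. That bound gives the same sensitivity as your Mills-ratio estimate; what fails is worst-case pointwise control. A pointwise route could be repaired only by also showing that, for fixed $z$, $M_n(|X|)$ lands with vanishing probability in the $o_P(1)$-wide window where the two integrands differ. The paper does not do this.

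\textbf{Your route.} You avoid pointwise control altogether. With $a_r=\sqrt{r/(1-r)}$ one has exactly
$$F_r(y)=\int\Psi^n_{a_r z,1}\bigl(y/\sqrt{1-r}\bigr)\phi(z)\,dz .$$
Hence $|F_{\widehat\rho}(m)-F_\rho(m\sigma/\widehat\sigma)|$ is at most the total variation distance between $N(0,a_{\widehat\rho}^2)$ and $N(0,a_\rho^2)$. This bound is uniform in $m$ and $n$, and it is $o_P(1)$ because Assumption~1 implies $\widehat\rho\to\rho$ in probability. The remaining term is handled by the $O(1)$-scale smoothness of the law of $M_n(|X|)$, which comes from the common factor $\sqrt\rho Z_0$. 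Your route thus uses Assumption~1 at exactly its stated rate, and it shows why $\rho>0$ is essential.

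\textbf{Tightening the write-up.}
\begin{itemize}
\item The change of variables is a rescaling of $z$, not a shift. It is exact, so no localization of the integral is needed. Because the bound is uniform, plugging in the random $m$ and $\widehat\rho$ is harmless.
\item Equicontinuity can be made quantitative without limit laws. Conditional on $Z_1,\dots,Z_n$, the map $z\mapsto\max_i|\sqrt\rho z+\sigma Z_i|$ is convex with slopes $\pm\sqrt\rho$. So $F_\rho$ is Lipschitz with constant $\sqrt{2/(\pi\rho)}$ for every $n$, giving
$$|F_\rho(m\sigma/\widehat\sigma)-F_\rho(m)|\le\sqrt{2/(\pi\rho)}\,m\,|\sigma-\widehat\sigma|/\widehat\sigma .$$
This is $o_P(1)$ since $m=O_P(\sqrt{\ln n})$.
\item Steps~1 and~3 then become superfluous. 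In particular, drop the claim that the $-m$ terms are negligible because $m\to\infty$. That claim is false for $z<0$, where $n\bar\Phi((m+\sqrt\rho z)/\sigma)$ is the dominant tail.
\end{itemize}
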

Because the observed $p$-value, $p_{0}(\widehat{\rho})$, converges to the theoretical $p$-value, $p_{0}(\rho)$, FWER is controlled. For completeness, Theorem \ref{thm3} proves FWER is controlled in the one-sided test case. Without loss of generality, we assume a positive alternative, $H_a:\bm{\mu}>0$ in this Theorem. For negative alternative, $H_a:\bm{\mu}<0$, simply replace $M_n(X)$ with $M_n(-X)$. 
\begin{theorem}\label{thm3}
 Consider a sequence of jointly standard normal random variables $X=(X_1,...,X_n)'$ where
$$E[X_i]=0,\hspace{2cm}E[X_i^2]=1,\hspace{2cm} Corr(X_i,X_j)=\rho$$
for $1\leq i \leq n, 1\leq j \leq n, i\neq j, $ and $0<\rho <1$.
Then, with $M_n=\max(X_1,...,X_n)$ and an estimator $\widehat{\rho}$ that satisfies Assumption \ref{assumption:rho}, 
\begin{equation*}\label{thm2eq2}
    \left| \int_{-\infty}^{\infty}\left(\Phi^n\left( \frac{x\sqrt{\rho}+M_n(X)}{\sqrt{1-\rho}} \right)-\Phi^n\left( \frac{x\sqrt{\widehat{\rho}}+M_n(X)}{\sqrt{1-\widehat{\rho}}} \right) \right)\phi(x)dx\right|\xrightarrow[n \rightarrow \infty]{P}0,
\end{equation*}
implying convergence of the $p$-value for $M_n(X)$.
\end{theorem}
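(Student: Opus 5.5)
Under the null, write $X_i=\sqrt{\rho}Z_0+\sqrt{1-\rho}Z_i$. The plan is to show that the two integrands become close uniformly in the integration variable $x$, on a region that carries almost all the $\phi$-mass, once we localize at the scale of $M_n(X)$. Let $\sigma=\sqrt{1-\rho}$ and $\widehat\sigma=\sqrt{1-\widehat\rho}$. Define
$$g_n(x;\rho)=\Phi^n\!\left(\frac{x\sqrt\rho+M_n}{\sigma}\right).$$

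\textbf{Step 1: localize $M_n$ and truncate the integral.} Since $M_n=\sqrt\rho Z_0+\sigma\max_i Z_i$ and $\max_i Z_i=b_n+O_P(1/b_n)$ with $b_n=\sqrt{2\ln n}$, we get $M_n=\sigma b_n+O_P(1)$. Fix $K$. On $|x|\le K$, the argument $u_n(x)=(x\sqrt\rho+M_n)/\sigma$ equals $b_n+O_P(1)$ uniformly in $x$. The tail $|x|>K$ contributes at most $2P(|Z|>K)$ to the absolute difference, because both integrands lie in $[0,1]$. We will let $K\to\infty$ at the very end.

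\textbf{Step 2: compare the arguments.} Write $\widehat u_n(x)=(x\sqrt{\widehat\rho}+M_n)/\widehat\sigma$. On $|x|\le K$,
$$|u_n-\widehat u_n|\le K\left|\frac{\sqrt\rho}{\sigma}-\frac{\sqrt{\widehat\rho}}{\widehat\sigma}\right|+M_n\frac{|\widehat\sigma-\sigma|}{\sigma\widehat\sigma}.$$
Assumption \ref{assumption:rho} gives $b_n|\sigma-\widehat\sigma|\to_P0$. Hence $\widehat\sigma\to_P\sigma$ and $\widehat\rho\to_P\rho$, so the first term is $o_P(1)$. Since $M_n=O_P(b_n)$, the second term is $O_P(b_n|\widehat\sigma-\sigma|)=o_P(1)$. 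So $\sup_{|x|\le K}|u_n-\widehat u_n|=o_P(1)$.

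\textbf{Step 3: control $\Phi^n$ near $b_n$.} This is the main obstacle, because $\Phi^n$ has steep slope near $b_n$. By the mean value theorem,
$$|\Phi^n(u)-\Phi^n(\widehat u)|\le n\,\phi(\xi)\,\Phi^{n-1}(\xi)\,|u-\widehat u|$$
for some $\xi$ between $u$ and $\widehat u$. Here $\xi=b_n+O_P(1)$, so I need $n\phi(b_n+t)=O(b_n)$ uniformly for $t$ in compacts. Indeed $n\phi(b_n)=(2\pi)^{-1/2}$ and $\phi(b_n+t)/\phi(b_n)=e^{-tb_n-t^2/2}$. Taken naively this factor is not bounded, since $e^{-tb_n}$ blows up for negative $t$.

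The cure is the factor $\Phi^{n-1}(\xi)$, which equals $\exp(-n\bar\Phi(\xi)(1+o(1)))$. Use the Gumbel normalization with $a_n=b_n-(\ln\ln n+\ln 4\pi)/(2b_n)$, and note $M_n/\sigma=a_n+O_P(1/b_n)$ up to the $Z_0$ term. Rescale the deviation as $s=b_n(\xi-a_n)$. Then $n\phi(\xi)\Phi^{n-1}(\xi)\approx b_n e^{-s}\exp(-e^{-s})\le b_n/e$. So the derivative is $O(b_n)$ everywhere, uniformly.

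With only $|u-\widehat u|=o_P(1)$ this bound does not suffice, so I refine Step 2. The $x$-term differs by $|\sqrt{\widehat\rho}/\widehat\sigma-\sqrt\rho/\sigma|\cdot K$, which I must show is $o_P(1/b_n)$. That holds because $\sqrt{\widehat\rho}/\widehat\sigma-\sqrt\rho/\sigma$ is a smooth function of $\widehat\sigma$ near $\sigma$, so it is $O(|\widehat\sigma-\sigma|)=o_P(1/b_n)$ by Assumption \ref{assumption:rho}. The $M_n$-term is $M_n|\widehat\sigma-\sigma|/(\sigma\widehat\sigma)$. Naively this is $O_P(b_n\cdot o_P(1/b_n))=o_P(1)$, not $o_P(1/b_n)$.

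To recover the missing $1/b_n$, I switch to the multiplicative structure. On $|x|\le K$, $\Phi^n(u)$ is the distribution function of a Gumbel-type limit: $\Phi^n(a_n+y/b_n)\to\exp(-e^{-y})$ uniformly in $y$, and the map $y\mapsto\exp(-e^{-y})$ is uniformly continuous. Both $b_n(u_n-a_n)$ and $b_n(\widehat u_n-a_n)$ are affine images of the same quantity. Their difference is $b_n M_n(1/\sigma-1/\widehat\sigma)+o_P(1)$, which is of order $b_n^2|\widehat\sigma-\sigma|$. So I would actually need $b_n^2|\widehat\sigma-\sigma|\to_P0$, whereas Assumption \ref{assumption:rho} supplies only $b_n$. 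I expect this to be the delicate point.

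The first thing I would try is to check whether the $b_n$ factor cancels. The deviation $M_n/\sigma-a_n$ is $O_P(1)$, not $O(1/b_n)$, so the level at which the discrepancy matters is where $\Phi^n$ is evaluated at $b_n+O(1)$. There $\Phi^n$ is essentially $0$ or $1$ except within an $O(1/b_n)$ window. If the needed rate still fails, I would fall back on Theorem \ref{thm1}, which for MOM gives $\sqrt{a_n}$ for any $a_n=o(n)$, hence $b_n^2|\widehat\sigma-\sigma|\to_P0$. Either way, combining Steps 1--3, letting $n\to\infty$ and then $K\to\infty$ yields convergence in probability of the integral difference, and hence of the one-sided $p$-values.
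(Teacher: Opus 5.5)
Your Steps 1 and 2 are sound, and your diagnosis in Step 3 is correct. Since $\sup_t n\phi(t)\Phi^{n-1}(t)\asymp b_n$, the mean-value bound $\sup_t n\phi(t)\Phi^{n-1}(t)\cdot\sup_{|x|\le K}|u_n-\widehat u_n|$ is only $O(b_n)\cdot o_P(1)$. Closing it this way needs $b_n^2|\widehat\sigma-\sigma|\to_P0$, which Assumption~\ref{assumption:rho} does not give. Neither of your remedies closes the gap. The ``the $b_n$ factor cancels'' remark is the right intuition, but it is never turned into an argument. Falling back on Theorem~\ref{thm1} changes the hypothesis: it covers $\widehat\rho_{MOM}$ (or any estimator with a faster rate), whereas the statement is claimed for every $\widehat\rho$ satisfying Assumption~\ref{assumption:rho}. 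So, as written, the proposal does not prove the theorem. The paper's own argument does not supply the missing step either. It applies the mean value theorem pointwise in $x$ and asserts $n\Phi^{n-1}(c_x)\to0$ because $\Phi(c_x)\in(0,1)$. That fails because $c_x$ sits near $\sqrt{2\ln n}$; your account of the $O(b_n)$ slope is the accurate one.

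The missing idea is to integrate in $x$ before taking any supremum of the slope. Let $\delta_n=\sup_{|x|\le K}|u_n(x)-\widehat u_n(x)|$. For $|x|\le K$, monotonicity of $\Phi^n$ gives $|\Phi^n(u_n(x))-\Phi^n(\widehat u_n(x))|\le\Phi^n(u_n(x)+\delta_n)-\Phi^n(u_n(x)-\delta_n)$. Two facts finish it: $u_n$ is affine in $x$ with slope $\sqrt\rho/\sigma$, and $\int_{\mathbb{R}}\bigl(G(v+\delta)-G(v-\delta)\bigr)\,dv=2\delta$ for any distribution function $G$ (by Fubini). Together with $\phi\le 1/\sqrt{2\pi}$ they give
\[
\int_{|x|\le K}\bigl|\Phi^n(u_n(x))-\Phi^n(\widehat u_n(x))\bigr|\,\phi(x)\,dx\le\frac{1}{\sqrt{2\pi}}\cdot\frac{2\sigma}{\sqrt\rho}\,\delta_n .
\]
This is $o_P(1)$ by your Step 2, which is exactly the rate Assumption~\ref{assumption:rho} supplies. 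The steep part of $\Phi^n$ occupies an $x$-window of width only $O(1/b_n)$, so after integration the effective Lipschitz constant is $O(1)$ rather than $O(b_n)$. An equivalent route uses Fubini: $\int\Phi^n\bigl((x\sqrt r+m)/\sqrt{1-r}\bigr)\phi(x)\,dx=E\,\Phi\bigl((m-\sqrt{1-r}\,W_n)/\sqrt r\bigr)$, with $W_n\sim\Phi^n$ independent. This moves the Lipschitz requirement onto $\Phi$, whose constant is $1/\sqrt{2\pi}$. Then split $m-\widehat\sigma W_n=(m-\sigma W_n)-(\widehat\sigma-\sigma)W_n$ and use $E|W_n|=O(b_n)$ to reach the same conclusion. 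With either replacement for Step 3, your truncation and $K\to\infty$ argument completes the proof for a general $\widehat\rho$.
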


Once the global null is rejected, this Bonferroni-type procedure allows an identification of significant tests. A test statistic $X_i$ will be significant if and only if $X_i \geq c_{\alpha}$, where $c_{\alpha}$ can be found through numerical integration. This method will always yield the maximum, $M_n(|X|)$, as a signficant test statistic, though may be too conservative for the other test statistics. See Figure \ref{Figure 3} for simulation evidence of the conservative nature and Section \ref{section:conclusion} for discussion on potential adjustments.

\section{Simulation Studies}\label{section:sim}
When simulating the exchangeable sequence $X$, we use the following computationally efficient scheme,
$$X_i = \sqrt{\rho}Z_0+\sqrt{1-\rho}Z_i +\mu_i,$$
where $Z_i$ are independent and identically distributed standard normal variates for $0\leq i \leq n$.
We will perform three sets of simulations. In the first two sets of simulations, the power of \ref{gnpa}-\ref{MOM} (GNPMOM) is compared against Bonferroni \citep{bonf36}, HMP and HMPADJ \citep{wilson19}, and F-P \citep{wilson19}. HMP indicates the harmonic mean of $p$-values, whereas HMPADJ indicates the adjusted harmonic mean of $p$-values using the Landau distribution. 

The first set of simulations considers an extremely sparse alternative: $\bm{\mu}=(\mu_1,0,\ldots,0)'$. $\mu_1$ will be selected from the range $[0,3]$, increasing by 0.05 each run. We set the number of iterations $m=10{,}000$; number of test statistics $n=1{,}000$; varying $\rho \in (0.0, 0.2, 0.5, 0.9)$; and varying FWER $\alpha \in (0.01, 0.05, 0.10)$. See results in Figure \ref{Figure 1}. Observe that \ref{gnpa}-\ref{MOM} yields at least equal power compared to other methods. When $\rho$ and $\alpha$ increase we are rewarded with increased power while other methods falter. 

The second set of simulations considers an alternative which begins sparse and becomes more dense. In $\bm{\mu}$, the first $n-s$ values are set to $0$ while the other $n-s+1$ values are set to $\sqrt{\ln(n)}/s^{0.1}$, where $s \in [0,1]$. The same parameters $m,n,\rho,\alpha$ listed the paragraph above were used. For results, refer to Figure \ref{Figure 2}. When the ratio of non-zero means is less than 10\%, \ref{gnpa}-\ref{MOM} appears to attain higher power compared to the other methods. This effect is particularly noticeable for larger $\rho$ and $\alpha$. 

The last simulation considers a $\bm{\mu}$ where $n/2$ test statistics attain mean $3$ while the remaining $n/2$ test statistics attain mean $0$. In this method, we set $m=1$; $n=1{,}000$; $\alpha = 0.10$; and $\rho = 0.5$. The test statistics and plotted and a horizontal red line is imposed to denote the critical value $c_{\alpha}$. See the results in Figure \ref{Figure 3}. Although it is guaranteed to select the maximum $X_{(n)}$ as a significant test statistic, $c_{\alpha}$ struggles to capture other test statistics which exist in the alternative hypothesis. Out of 500 possible test statistics to capture, only 8 are denoted significant. Further refinement of this process is discussed in Section \ref{section:conclusion}.

Lastly, for size analysis of the methods, refer to Appendix \ref{app:table}. In the tables, we see that method \ref{gnpa}-\ref{MOM} provides exact control of $\alpha$ for varying $n$, $\rho$, and $\alpha$, while other methods do not provide exact control everywhere.

\begin{figure}[h!]
    \centering
    \includegraphics[width=1.1\linewidth]{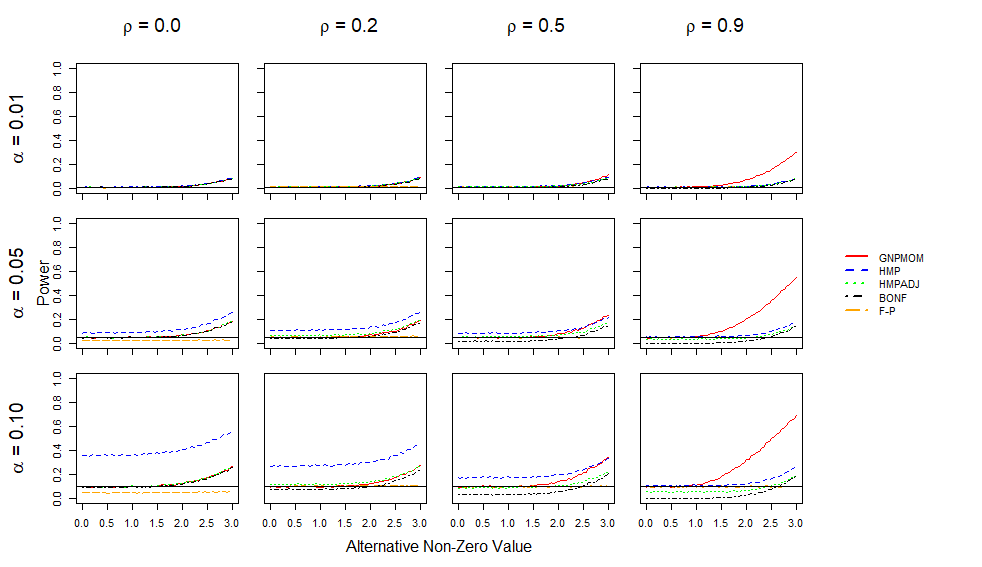}

    \caption{Comparison of the power of \ref{gnpa} against various methods when a single test statistic attains a non-zero mean. Size is indicated by the horizontal black line. Note the higher power achieved by \ref{gnpa} compared to the other methods.}    \label{Figure 1}
\end{figure}
\begin{figure}[h!]
    \centering
    \includegraphics[width=1.1\linewidth]{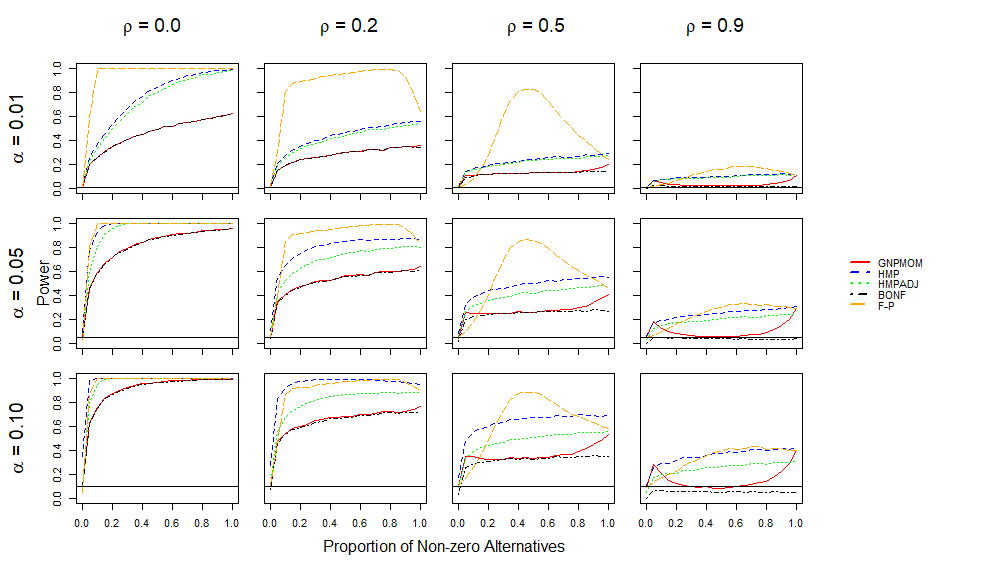}
    
    \caption{Comparison of the power of \ref{gnpa} against various methods as the proportion of test statistics attaining non-zero mean increases. Size is indicated by the horizontal black line. Note the higher power achieved by \ref{gnpa} compared to the other methods when the proportion of non-zero mean test statistics is less than 10\%.}\label{Figure 2}
\end{figure}
\begin{figure}[h]
\centering
\includegraphics[width = 1\linewidth]{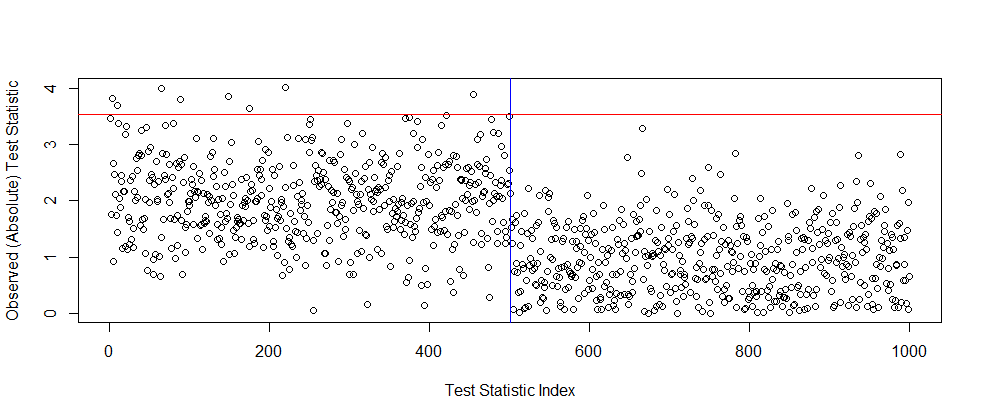}

\caption{Absolute test statistics plotted based on index, where indices less than 500 attain mean 3 while others attain mean 0. The red line indicates the critical value while the blue line indicates the separation between mean 3 and mean 0. Note the conservative nature of the critical value when detecting other test statistics which are not the maximum. }\label{Figure 3}
\end{figure}

\section{Conclusion}\label{section:conclusion}
In this paper, we extended the one-sided findings presented in \cite{gupta73} to the two sided case and provided an implementable form using an estimator described in \cite{follman13}. This method provides a stable control of the FWER at size $\alpha$ as $n \rightarrow \infty$, with reasonable control already at $n=20$. This method is intuitive as a Bonferroni-like procedure which is favorable for practitioners. However, numerical integration may be a barrier for many people. For one-sided alternatives, refer to Table 1 in \cite{gupta73} after estimating $\rho$ with \ref{MOM}. 

The selection process of significant test statistics can be improved. Although guaranteed to yield the maximum test statistic as significant, it fails to detect many others. A couple of sequential-type tests are recommended for further investigation. One method involves directly using the distribution functions for the $r^{th}$ largest variates as described in \cite{gupta73}. The other method removes the maximal data point and reexamining the global hypothesis, repeating until a non-significant test appears. One could also directly find the joint distribution of order statistics $X_{(1)},...,X_{(n)}$ to solve for a critical value $c_{\alpha}$. Further extensions into the $T$ case is desired, particularly the case of $T$ statistics with different realized (independent) $\chi^2$ variates.
\newpage
\newpage
\newpage
\bibliographystyle{chicago}
\bibliography{cites.bib}

\newpage

\appendix
\section{Proofs}\label{app}
\begin{lemma}\label{lemma1}
 Consider a sequence of jointly standard normal random variables $X=(X_1,...,X_n)'$ where
$$E[X_i]=0,\hspace{2cm}E[X_i^2]=1,\hspace{2cm} Corr(X_i,X_j)=\rho$$
for $1\leq i \leq n, 1\leq j \leq n, i\neq j, $ and $0<\rho <1$. 
\newline
\newline
Using the estimator 
$$\widehat{\rho_a} = 1-s_n^2,$$
where $s_n^2$ indicates the (unbiased) sample variance, we have that
$$\widehat{\rho_a} \xrightarrow[n \rightarrow \infty]{P} \rho$$
and
$$\widehat{\rho_a}-\rho = O_{P}\left(\frac{1}{\sqrt{n}} \right).$$
\end{lemma}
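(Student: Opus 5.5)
We will use the one-factor representation $X_i=\sqrt{\rho}Z_0+\sqrt{1-\rho}Z_i$, $i=1,\dots,n$, with $Z_0,Z_1,\dots,Z_n$ i.i.d.\ standard normal. The sample variance is invariant to adding a common constant to every coordinate, since $(I_n-\tfrac1n J_n)\mathbf{1}=0$. Hence the shared term $\sqrt{\rho}Z_0$ drops out:
$$s_n^2=\frac{1-\rho}{n-1}\,Z'\Big(I_n-\frac1n J_n\Big)Z=(1-\rho)\,\tilde s_n^2,$$
where $Z=(Z_1,\dots,Z_n)'$ and $\tilde s_n^2$ is the ordinary sample variance of the i.i.d.\ $N(0,1)$ sample $Z_1,\dots,Z_n$. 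This reduction is the key step. It turns a problem with strong dependence, where there is no replication of $Z_0$, into a problem about an i.i.d.\ sample whose distribution is completely known.

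By Cochran's theorem, $(n-1)\tilde s_n^2\sim\chi^2_{n-1}$, because $I_n-\tfrac1n J_n$ is an idempotent matrix of rank $n-1$. Therefore
$$E[\tilde s_n^2]=1,\qquad \mathrm{Var}(\tilde s_n^2)=\frac{2}{n-1}.$$
It follows that
$$\widehat{\rho_a}-\rho=(1-\rho)(1-s_n^2)/(1-\rho)\cdot(1-\rho)^{0}\;=\;(1-\rho)\big(1-\tilde s_n^2\big),$$
that is, $\widehat{\rho_a}-\rho=(1-\rho)(1-\tilde s_n^2)$, since $1-(1-\rho)\tilde s_n^2-\rho=(1-\rho)(1-\tilde s_n^2)$. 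This quantity has mean $0$ and variance $2(1-\rho)^2/(n-1)$.

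Chebyshev's inequality then gives, for any $M>0$,
$$P\big(\sqrt{n}\,|\widehat{\rho_a}-\rho|>M\big)\le \frac{2n(1-\rho)^2}{(n-1)M^2}\le\frac{4(1-\rho)^2}{M^2}\quad\text{for } n\ge 2.$$
The right-hand side can be made arbitrarily small uniformly in $n$ by choosing $M$ large, which is exactly the statement $\widehat{\rho_a}-\rho=O_P(n^{-1/2})$. Consistency follows immediately, because $O_P(n^{-1/2})=o_P(1)$. Alternatively, consistency follows from the weak law of large numbers applied to $\tilde s_n^2$.

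I do not expect a real obstacle once the common factor is eliminated. The only point needing care is justifying that elimination, namely that centering annihilates $Z_0$. Without it, one might wrongly expect the common factor to prevent consistency.
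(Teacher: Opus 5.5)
Your proof is correct and follows essentially the same route as the paper: both rest on $(n-1)s_n^2/(1-\rho)\sim\chi^2_{n-1}$, the resulting variance $2(1-\rho)^2/(n-1)$, and Chebyshev's inequality. Your cancellation of the common factor justifies the $\chi^2$ claim, which the paper only asserts, and your direct Chebyshev bound replaces the paper's citation of Bishop et al.\ for the $O_P(n^{-1/2})$ rate. The first expression in your display for $\widehat{\rho_a}-\rho$ is garbled, since it simplifies to $1-s_n^2$ rather than $\widehat{\rho_a}-\rho$, but the identity $\widehat{\rho_a}-\rho=(1-\rho)(1-\tilde s_n^2)$ that you state right after it is correct.
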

\begin{proof}[Proof of Lemma \ref{lemma1}]
 First, we want to show that
$$1-s_n^2 \xrightarrow[n \rightarrow \infty]{P}\rho.$$
The sample variance is defined to be
$$s_n^2=\frac{X' \left( I_n-\frac{1}{n}J_n \right) X}{n-1},$$
where $I_n$ is the identity matrix of size $n$ and $J_n$ is the matrix of $1$s of size $n$. So the expected value is
\begin{equation}
\begin{aligned}
    E[s^2_n]&=\text{tr}\left(\frac{\left(I-\frac{1}{n}J\right)}{n-1}\Sigma\right) \\
    &=1-\rho,
    \end{aligned}
\end{equation}
where $\Sigma$ is the matrix representation of the exchangeable structure and tr$(\cdot)$ is the trace of a matrix. Note then that
$$\frac{(n-1)s_n^2}{1-\rho} \sim \chi^2(n-1)$$
so we have that
\begin{equation}
    \begin{aligned}
        &Var\left( \frac{(n-1)s_n^2}{1-\rho} \right) = 2(n-1) \\
        &\Rightarrow Var(s_n^2) = \frac{2(1-\rho)^2}{(n-1)}.
    \end{aligned}
\end{equation}
So by Chebyshev's, we then have that $\forall \epsilon > 0,$
\begin{equation}
    \begin{aligned}
        P(|1-s_n^2-\rho|\geq \epsilon) &\leq \frac{2(1-\rho)^2}{(n-1)\epsilon^2} \\
        &\xrightarrow[n \rightarrow \infty]{}0,
    \end{aligned}
\end{equation}
which sufficiently shows $\widehat{\rho_a} \xrightarrow[n \rightarrow \infty]{P} \rho$. 
\newline
\newline
To show $\widehat{\rho_a}-\rho = O_P \left (\frac{1}{\sqrt{n}} \right)$, we apply Theorem 14.4-1 in \cite{bishop74}, so
\begin{equation}
    \begin{aligned}
        \widehat{\rho_a}-\rho &=O_P \left( \sqrt{Var(\widehat{\rho_a}}) \right) \\
        &=O_P \left( \frac{1}{\sqrt{n}}\right).
    \end{aligned}
\end{equation}
\end{proof}
\begin{lemma}\label{lemma2}
Consider the sequence of jointly standard normal random variables defined in Lemma 1. If we let $I(\cdot)$ represent the indicator function, we have then
$$I(s_n^2<1) \xrightarrow[n \rightarrow \infty]{P}1$$
which, consequently, provides us that
\begin{enumerate}
    \item $I(s_n^2>1) \xrightarrow[n \rightarrow \infty]{P}0,$
    \item $a_nI(s_n^2>1) \xrightarrow[n \rightarrow \infty]{P} 0\hspace{1cm}\text{where } a_n \in \mathbb{R}^n.$
\end{enumerate}
\end{lemma}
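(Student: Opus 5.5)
The plan is to reduce everything to a single tail bound, $P(s_n^2 \geq 1) \to 0$, using the concentration of $s_n^2$ around $1-\rho$ from the proof of Lemma \ref{lemma1}. The strict inequality $\rho > 0$ is essential, because it makes the target value $1-\rho$ lie strictly below the threshold $1$.

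First I would verify $I(s_n^2<1) \xrightarrow{P} 1$. Since the indicator takes only the values $0$ and $1$, for any $\epsilon \in (0,1)$,
$$P\left(|I(s_n^2<1)-1| > \epsilon\right) = P(s_n^2 \geq 1),$$
and for $\epsilon \geq 1$ this probability is $0$. The event $\{s_n^2 \geq 1\}$ is contained in $\{|s_n^2-(1-\rho)| \geq \rho\}$. Using $Var(s_n^2) = 2(1-\rho)^2/(n-1)$ from Lemma \ref{lemma1}, Chebyshev's inequality gives
$$P(s_n^2 \geq 1) \leq \frac{2(1-\rho)^2}{(n-1)\rho^2} \xrightarrow[n\to\infty]{} 0.$$
Equivalently, one could appeal directly to $\widehat{\rho_a} = 1-s_n^2 \xrightarrow{P} \rho > 0$.

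For item 1, I would use the pointwise bound $0 \leq I(s_n^2>1) \leq 1-I(s_n^2<1)$. Hence, for $\epsilon \in (0,1)$, $P(I(s_n^2>1)>\epsilon) \leq P(s_n^2 \geq 1) \to 0$, again by the bound above.

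For item 2, I would read $a_n$ as an arbitrary deterministic real sequence; it may even grow, for example $a_n = \sqrt{2\ln n}$ or $a_n = o(n)$ as in Theorem \ref{thm1}. The random variable $a_n I(s_n^2>1)$ is nonzero only on $\{s_n^2>1\}$. Therefore, for every $\epsilon>0$,
$$P\left(|a_n I(s_n^2>1)|>\epsilon\right) \leq P(s_n^2>1) \to 0,$$
with no condition on the growth of $a_n$.

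I do not expect a genuine obstacle here. The only points needing care are the boundary event $\{s_n^2 = 1\}$, which is handled by bounding with $\{s_n^2 \geq 1\}$ (or by noting it has probability zero, since $s_n^2$ has a continuous scaled $\chi^2(n-1)$ law), and stating explicitly that $a_n$ is nonrandom so that the multiplication does not interact with the probability bound.
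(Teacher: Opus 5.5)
Your proof is correct. It differs from the paper's in the key step.

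The paper proves $P(s_n^2<1)\to 1$ by standardizing the $\chi^2(n-1)$ variable $(n-1)s_n^2/(1-\rho)$ and appealing to its normal approximation. The cutoff becomes $\frac{\rho}{1-\rho}\sqrt{(n-1)/2}\to\infty$, so the probability tends to $P(Z<\infty)=1$. That route is asymptotic. As written, it also skips the small argument needed to combine convergence in distribution with a diverging cutoff: fix $M$, bound below by $P(W_n<M)\to\Phi(M)$, then let $M\to\infty$.

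Your Chebyshev bound instead reuses the variance already computed in Lemma \ref{lemma1}. It holds for every $n$ with an explicit $O(1/n)$ rate. It also makes the role of $\rho>0$ visible, since $\rho$ is exactly the gap between the mean $1-\rho$ and the threshold $1$.

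For items 1 and 2 the two arguments rest on the same idea: $a_nI(s_n^2>1)$ vanishes off $\{s_n^2>1\}$. Your event inclusion states this more cleanly than the paper does. The paper works with $\sqrt{a_n}$ rather than $a_n$, divides by it (which implicitly requires $a_n>0$), and does a case split on the value of the indicator inside a probability. Incidentally, your inclusion does not need $a_n$ to be deterministic.
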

\begin{proof}[Proof of Lemma \ref{lemma2}]
We want to show that
$$I(s_n^2<1) \xrightarrow[n \rightarrow \infty]{P} 1$$
which is equivalent to the statement 
$$P(s_n^2<1) \xrightarrow[n \rightarrow \infty]{}1.$$
Recall that from Lemma \ref{lemma1} that we can adjust $s_n^2$ into a $\chi^2$ variable. Then, we can use the convergence property from Chapter 5 of \cite{nitis00} to show that
\begin{equation}
    \begin{aligned}
        P\left( \frac{(n-1)s_n^2}{1-\rho} < \frac{n-1}{1-\rho} \right) &= P\left( \frac{\frac{(n-1)s_n^2}{1-\rho}-(n-1)}{\sqrt{2(n-1)}} < \frac{\frac{n-1}{1-\rho} -(n-1)}{\sqrt{2(n-1)}} \right) \\
        &\xrightarrow[n \rightarrow \infty]{D}P(Z<\infty) \\
        &=1
    \end{aligned}
\end{equation}
where $Z \sim N(0,1)$. Therefore, we have that $I(s_n^2<1) \xrightarrow[n \rightarrow \infty]{P}1$.
\newline
\newline
Then consequence 1) is straightforward. For consequence 2), we have then
\begin{equation}
    \begin{aligned}
        P \left( \sqrt{a_n} \left|I(s_n^2>1) \right| > \epsilon\right) &= P\left( I(s_n^2>1) > \frac{\epsilon}{\sqrt{a_n}} \right) \\
        &=
        \left\{
        \begin{array}{ll}
0 & \text{ if } I(s_n^2>1) = 0 \\ \\
P(s_n^2>1) & \text{ if } I(s_n^2>1) = 1  \\
\end{array} \right.
    \end{aligned}
\end{equation}
which, from consequence 1), produces convergence to 0 in probability. 
\end{proof}
\begin{lemma}\label{lemma3}
Consider the sequence of jointly standard normal random variables defined in Lemma \ref{lemma1}. Then, we have that
$$\widehat{\rho} = (1-s_n^2)I(s_n^2<1) \xrightarrow[n \rightarrow \infty]{P}\rho$$
\end{lemma}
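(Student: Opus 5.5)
The plan is to write $\widehat{\rho}$ as $\widehat{\rho_a}$ plus a remainder that vanishes in probability. The starting identity is
\begin{equation*}
\widehat{\rho}=(1-s_n^2)I(s_n^2<1)=\widehat{\rho_a}-\widehat{\rho_a}\,I(s_n^2\geq 1),
\end{equation*}
which gives
\begin{equation*}
\widehat{\rho}-\rho=(\widehat{\rho_a}-\rho)-\widehat{\rho_a}\,I(s_n^2\geq 1).
\end{equation*}
The two terms are then handled separately.

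The first term is immediate: Lemma \ref{lemma1} gives $\widehat{\rho_a}-\rho\xrightarrow{P}0$.

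For the second term, Lemma \ref{lemma2} gives $I(s_n^2<1)\xrightarrow{P}1$. Its proof actually shows $P(s_n^2\geq 1)\to 0$, because $1>1-\rho$ is a fixed gap above the mean of $s_n^2$, so the boundary event $\{s_n^2=1\}$ is also covered. Hence $I(s_n^2\geq1)\xrightarrow{P}0$. Since $\widehat{\rho_a}\xrightarrow{P}\rho$, it is $O_P(1)$, and the product $\widehat{\rho_a}I(s_n^2\geq1)$ tends to $0$ in probability by Slutsky's theorem.

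The elementary way to make that product step precise is as follows. For any $\epsilon>0$,
\begin{equation*}
P\bigl(|\widehat{\rho_a}I(s_n^2\geq1)|>\epsilon\bigr)\leq P(s_n^2\geq 1)\to0,
\end{equation*}
because the product is nonzero only on the event $\{s_n^2\geq1\}$. This sidesteps any need to control the size of $\widehat{\rho_a}$.

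Adding the two pieces gives $\widehat{\rho}-\rho\xrightarrow{P}0$, which is the claim of Lemma \ref{lemma3}. The only delicate point is handling the event $\{s_n^2=1\}$, as opposed to $\{s_n^2>1\}$, in the indicator. This is harmless: the event has probability zero, since $s_n^2$ is a scaled $\chi^2$ variable with a continuous distribution, and in any case the Chebyshev bound from Lemma \ref{lemma1} covers it:
\begin{equation*}
P(s_n^2\geq1)\leq P\bigl(|s_n^2-(1-\rho)|\geq\rho\bigr)\leq \frac{2(1-\rho)^2}{(n-1)\rho^2}\to 0,
\end{equation*}
which uses $\rho>0$.
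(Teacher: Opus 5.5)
Your proof is correct and is essentially the paper's argument, which simply invokes Slutsky's theorem on $\widehat{\rho}=\widehat{\rho_a}\,I(s_n^2<1)$ using Lemma \ref{lemma1} and Lemma \ref{lemma2}. Your additive split with the bound $P(|\widehat{\rho_a}I(s_n^2\geq 1)|>\epsilon)\leq P(s_n^2\geq 1)$ makes the product step explicit, and your Chebyshev bound on $P(s_n^2\geq 1)$ makes the argument self-contained, so it does not depend on the CLT argument in the proof of Lemma \ref{lemma2}.
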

\begin{proof}[Proof of Lemma \ref{lemma3}]
This is an application of Slutsky's theorem with Lemma \ref{lemma1} and Lemma \ref{lemma2}.
\end{proof}
\begin{proof}[Proof of Theorem \ref{thm1}]
Using Taylor Series on $f(x)=\sqrt{x}$ centered at $1-\rho$, observe that
\begin{equation}
    \begin{aligned}
        &f(1-\widehat{\rho}) = \sqrt{1-\rho} +\frac{1}{2\sqrt{1-\rho}}(\rho-\widehat{\rho})+R_m \\
        &\Rightarrow \sqrt{1-\widehat{\rho}}-\sqrt{1-\rho}=\frac{1}{2\sqrt{1-\rho}}(\rho -\widehat{\rho})+R_m
    \end{aligned}
\end{equation}
where $R_m = \frac{\sum_{m=2}^{\infty}f^{(m)}(1-\rho)(\rho - \widehat{\rho})^{m}}{m!} \sim O_P((p-\widehat{\rho})^m) = O_p\left( \left(\frac{1}{\sqrt{n}}\right)^m \right).$ So to show that Assumption \ref{assumption:rho} holds for $\widehat{\rho}$, we simply have to show that $\sqrt{a_n}(\rho - \widehat{\rho}) \xrightarrow[n \rightarrow \infty]{P}0$. With $\widehat{\rho_a}=1-s_n^2$ as in Lemma \ref{lemma1}, we have that
\begin{equation}
    \begin{aligned}
        \sqrt{a_n}\left(\widehat{\rho} - \rho \right) &= \sqrt{a_n} \left((1-s_n^2)I(s_n^2<1)-\rho -\rho I(s_n^2<1) +\rho I(s_n^2<1  \right) \\
        &=\sqrt{a_n}(\widehat{\rho_a}-\rho)I(s_n^2<1) -\sqrt{a_n}I(s_n^2>1)\rho \\
        &=o\left(\sqrt{n}\right)O_p\left(\frac{1}{\sqrt{{n}}} \right)I(s_n^2<1)-o\left(\sqrt{n}\right)I(s_n^2>1)\rho.
\end{aligned}
\end{equation}
Here, both $o\left( \sqrt{n} \right)O_p\left(\frac{1}{\sqrt{{n}}} \right)I(s_n^2<1)$ and $\sqrt{\ln(n)}I(s_n^2>1)\rho$ converges to 0 in probability through Slutsky's theorem and Lemma \ref{lemma2}. Consequently, $\sqrt{a_n}R_m \xrightarrow[n\rightarrow \infty]{P}0$ as well, so then by the continuous mapping theorem, the entire expression converges to 0 in probability and Assumption \ref{assumption:rho} holds.
\end{proof}
\begin{proof}[Proof of Theorem \ref{thm3}]

First let $F_n(x, \rho, M_n)=\Phi^n\left( \frac{x\sqrt{\rho}+M_n}{\sqrt{1-\rho}} \right)$ for brevity, and see from the mean value theorem that we can find some $c_x$ such that
\begin{equation*}
    \begin{aligned}
        &\left |\int_{-\infty}^{\infty}\left(F(x, \rho, M_n(X)) - F(x, \widehat{\rho}, M_n(X)) \right)\phi(x)dx\right | \\&\leq\int_{-\infty}^{\infty}\left|\left(F(x, \rho, M_n(X)) - F(x, \widehat{\rho}, M_n(X)) \right)\phi(x)\right|dx\\
        &= \int_{-\infty}^{\infty}n(\Phi(c_x))^{n-1}\phi(c_x)\left | \ \frac{x\sqrt{\rho}+M_n(X)}{\sqrt{1-\rho}} - \frac{x\sqrt{\widehat{\rho}}+M_n(X)}{\sqrt{1-\widehat{\rho}}} \right|\phi(x)dx \\
        &\leq \frac{1}{\sqrt{2\pi}} \int_{-\infty}^{\infty}n\Phi^{n-1}(c_x)\left| f_n(x,\hat{\rho},\rho)\right|\phi(x)dx
    \end{aligned}
\end{equation*}
where
\begin{equation*}
    \begin{aligned}
        f_n(x,\widehat{\rho},\rho) = \frac{1}{\sqrt{1-\rho}\sqrt{1-\widehat{\rho}}}\cdot
        (&A_n(\rho)\sqrt{1-\widehat{\rho}}-A_n(\rho)\sqrt{1-\rho}) \\ &-\sqrt{2\ln(n)(1-\rho)}(\sqrt{1-\rho}-\sqrt{1-\widehat{\rho}}) \\ &+x(\sqrt{\rho}\sqrt{1-\widehat{\rho}}-\sqrt{\widehat{\rho}}\sqrt{1-\rho}) 
    \end{aligned}
\end{equation*}
and observe that $n\Phi^{n-1}(c_x) \xrightarrow[n \rightarrow \infty]{} 0$ since $\Phi(c_x) \in (0,1)$. Then, we have
$$f_n(x,\widehat{\rho},\rho) \xrightarrow[n \rightarrow \infty]{D}0$$
pointwise in $x$ through the continuous mapping theorem and the assumption. Because $$F(x, \rho, M_n(X))-F(x, \widehat{\rho}, M_n(X))\xrightarrow[n \rightarrow \infty]{}0$$ pointwise in $x$ and is dominated by $\phi(x)$, we have by the dominated convergence theorem that
$$\lim_{n\rightarrow \infty}\int_{-\infty}^{\infty}n\Phi^{n-1}(c_x)\left| f_n(x,\hat{\rho},\rho)\right|\phi(x)dx=\int_{-\infty}^{\infty}\lim_{n\rightarrow \infty} n\Phi^{n-1}(c_x) \left| f_n(x,\hat{\rho},\rho)\right|\phi(x)dx=0$$
which concludes the proof.
\end{proof}
\begin{lemma}\label{lemma4}
Consider a sequence of jointly standard normal random variables defined in Theorem 2 and define $M_n(X) = \max(X_1,...,X_n)$. Then, we have that
$$P(M_n(|X|)>c_{\alpha}) = 1-\int_{-\infty}^{\infty}\Psi^n_{\mu = \sqrt{\rho}z,\sigma=\sqrt{1-\rho}}(c_\alpha)\phi(z)dz$$
where $\phi(\cdot)$ is the standard normal density function, $\Psi_{\mu, \sigma}(\cdot)$ is the cumulative distribution function of $|Y|$ when $Y \sim N(\mu, \sigma)$, and $c_\alpha$ is some real number.
\end{lemma}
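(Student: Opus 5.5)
We will prove Lemma \ref{lemma4} by conditioning on the common factor in the exchangeable representation and then using conditional independence. Under the stated structure we write $X_i = \sqrt{\rho}Z_0 + \sqrt{1-\rho}Z_i$ for $i=1,\dots,n$, with $Z_0,Z_1,\dots,Z_n$ i.i.d.\ standard normal. This is justified because the vector so defined is jointly normal with mean $\bm{0}$, unit variances and common covariance $\rho$. Since a jointly normal law is determined by its mean and covariance, it has the same distribution as $X$. Hence any probability involving $M_n(|X|)$ can be computed from this representation.

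We first pass to the complement: $P(M_n(|X|)>c_\alpha) = 1 - P(M_n(|X|)\le c_\alpha)$. The event $\{M_n(|X|)\le c_\alpha\}$ equals $\bigcap_{i=1}^n\{|X_i|\le c_\alpha\}$. By the tower property,
\begin{equation*}
P\left(\bigcap_{i=1}^n\{|X_i|\le c_\alpha\}\right) = E_{Z_0}\left[P\left(\bigcap_{i=1}^n\{|\sqrt{\rho}Z_0+\sqrt{1-\rho}Z_i|\le c_\alpha\} \,\Big|\, Z_0\right)\right].
\end{equation*}

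Next we fix $Z_0=z$. The variables $Y_i=\sqrt{\rho}z+\sqrt{1-\rho}Z_i$ are independent, because $Z_1,\dots,Z_n$ are independent of each other and of $Z_0$. Each $Y_i$ is $N(\sqrt{\rho}z,\,1-\rho)$, so $|Y_i|$ is folded normal with location $\sqrt{\rho}z$ and scale $\sqrt{1-\rho}$. The conditional probability therefore factorizes as $\prod_{i=1}^n P(|Y_i|\le c_\alpha)=\Psi^n_{\mu=\sqrt{\rho}z,\sigma=\sqrt{1-\rho}}(c_\alpha)$. Integrating against the density $\phi(z)$ of $Z_0$ gives the stated formula.

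If $c_\alpha<0$, both sides reduce trivially to $1$ because $\Psi\equiv 0$ there, so no separate argument is needed. The only point requiring care is making the conditioning step rigorous. We would handle it either by citing the independence of $Z_0$ from $(Z_1,\dots,Z_n)$ together with Fubini's theorem applied to the joint density, or by the standard substitution lemma for independent variables: $E[g(Z_0,Z_{1:n})\mid Z_0=z]=E[g(z,Z_{1:n})]$ for bounded measurable $g$, here an indicator. We do not expect a genuine obstacle; the proof is essentially the computation sketched in Section \ref{alternatives}, made precise.
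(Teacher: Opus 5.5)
Your proposal is correct and follows the same route as the paper's proof: the one-factor representation $X_i=\sqrt{\rho}Z_0+\sqrt{1-\rho}Z_i$, passing to the complement, conditioning on the common factor, factoring the conditional probability into $\Psi^n$ by conditional independence, and integrating against $\phi$. Your additions make the paper's argument more rigorous without changing it: you justify the representation by equality in law, you use $\le$ in the complement (the paper uses $<$) so it matches the CDF $\Psi$ exactly, and you invoke the substitution lemma for the conditioning step.
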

\begin{proof}[Proof of Lemma \ref{lemma4}]
    Since $X_1,...,X_n$ are $N(0,1)$ with equicorrelation $\rho$, we are able to express these $X_i$ as 
$$X_i = \sqrt{\rho}Z+\sqrt{1-\rho}Z_i,$$
where $Z$ and $Z_i$ are $N(0,1)$ and all are jointly independent. Then, we have that
\begin{equation}\label{eqlem4}
    \begin{aligned}
        P(M_n(|X|)>c_\alpha) &=1-P(M_n(|X|)<c_{\alpha}) \\
        &=1-P\left( M_n \left( \left|\sqrt{\rho}Z+\sqrt{1-\rho}Z_i \right| \right) < c_\alpha \right) \\
        &=1-E_{Z}\left[P\left( M_n \left( \left|\sqrt{\rho}Z+\sqrt{1-\rho}Z_i \right| \right) < c_\alpha|Z=z \right)  \right] \\
        &=1-\int_{-\infty}^{\infty}P\left( M_n \left( \left|\sqrt{\rho}z+\sqrt{1-\rho}Z_i \right| \right) < c_\alpha \right) \phi(z)dz.
    \end{aligned}
\end{equation}
Here, since $Z=z$ is given, we have $\sqrt{\rho}z+\sqrt{1-\rho}Z_i \sim N(\sqrt{\rho}z, \sqrt{1-\rho})$ and they are $i.i.d$. We then have
\begin{equation*}
    \begin{aligned}
        (\ref{eqlem4}) &= 1-\int_{\infty}^{\infty} P(M_n(|\sqrt{\rho}z+\sqrt{1-\rho}Z_1|)<c_\alpha)^n\phi(z)dz\\
        &=1-\int_{-\infty}^{\infty}\Psi^n_{\mu = \sqrt{\rho}z,\sigma=\sqrt{1-\rho}}(c_\alpha)\phi(z)dz.
    \end{aligned}
\end{equation*}
\end{proof}
\begin{proof}[Proof of Theorem \ref{thm2}] 
The cumulative distribution function for the folded normal distribution follows 
$$\Psi_{\mu, \sigma}(x) = \frac{1}{2}\left[\text{erf}\left( \frac{x+\mu}{\sigma \sqrt{2}}\right)+\text{erf} \left(\frac{x-\mu}{\sigma \sqrt{2}} \right) \right],$$
where erf$(\cdot)$ is the error function,
$$\text{erf}(x) = \frac{2}{\sqrt{\pi}}\int_{0}^{x}e^{-t^2}dt,$$
so we can show through two consecutive applications of the mean value theorem that
\begin{equation}\label{thm3eq2}
\begin{aligned}
    (\ref{thm3eq1}) &\leq \int_{-\infty}^{\infty}\left|\frac{c_{z,1}^{n-1}n}{2^n}\left(\frac{2e^{-c_{z,2}^2}}{\sqrt{\pi}} g_1(z, \rho, \widehat{\rho})   + \frac{2e^{-c_{z,3}^2}}{\sqrt{\pi}} g_2(z, \rho, \widehat{\rho})\right) \phi(z)\right|dz,
\end{aligned}
\end{equation}
where $c_{z_1}\in (-2,2)$ since $\text{erf}(x) \in (-1,1)$ for all $x$, $c_{z_2}$ and $c_{z_3}$ are in $\mathbb{R}$ dependent on $z$, and
$$g_1(z, \rho, \widehat{\rho}) = \frac{M_n\left( \left| X \right| \right) + \sqrt{\rho}z}{\sqrt{1-\rho}} -\frac{M_n\left( \left| X \right| \right) + \sqrt{\widehat{\rho}}z}{\sqrt{1-\widehat{\rho}}},$$
$$g_2(z, \rho, \widehat{\rho})=\frac{M_n\left( \left| X \right| \right) - \sqrt{\rho}z}{\sqrt{1-\rho}} -\frac{M_n\left( \left| X \right| \right) -\sqrt{\widehat{\rho}}z}{\sqrt{1-\widehat{\rho}}}.$$
Note that since $c_{z_1} \in (-2,2)$, then $\frac{c_{z_1}^{n-1}n}{2^n} \xrightarrow[n \rightarrow \infty]{}0$. Then, for $g_1$, we have that
\begin{equation}\label{thm3eq3}
    \begin{aligned}
        \sqrt{(1-\rho)(1-\widehat{\rho})}g_1(z, \rho, \widehat{\rho}) = M_n(|X|)\left(\sqrt{1-\rho}-\sqrt{1-\widehat{\rho}}\right)+z\left(\sqrt{\rho}\sqrt{1-\widehat{\rho}}-\sqrt{\widehat{\rho}}\sqrt{1-\rho}  \right).
    \end{aligned}
\end{equation}
Note that
\begin{equation*}
    \begin{aligned}
        M_n(|X|) &= \max_i\left(\left|\sqrt{\rho}z+\sqrt{1-\rho}Z_i\right| \right) \\
        &\leq \sqrt{\rho}|z|+\sqrt{1-\rho}\max_{i}(|Z_i|).
    \end{aligned}
\end{equation*}
From \cite{bibinger21}, we have that
$a_n \left(\max_{i}(|Z_i|-b_n \right) \xrightarrow[n\rightarrow \infty]{D}G,$
where
$$a_n = \frac{1}{\sqrt{2\ln(2n)}},$$
$$b_n = \sqrt{2\ln(2n)}-\frac{\ln(4\pi\ln(2n))}{2\sqrt{2\ln(2n)}},$$
and $G$ is the standard Gumbel distribution. Then, 
\begin{equation*}
    \begin{aligned}
        M_n(|X|)\left(\sqrt{1-\rho}-\sqrt{1-\widehat{\rho}} \right)\leq \left( \sqrt{\rho}|z|+\sqrt{1-\rho}\left(a_n\left(\max_{i}|Z_i|-b_n \right)\frac{1}{a_n}+a_nb_n\right)\right)\left(\sqrt{1-\rho}-\sqrt{1-\widehat{\rho}} \right),
    \end{aligned}
\end{equation*}
where then note that
$$\sqrt{1-\rho}-\sqrt{1-\widehat{\rho}} \xrightarrow[n \rightarrow \infty]{P}0,$$
$$\frac{1}{a_n}=o(\sqrt{n}) \Rightarrow \frac{1}{a_n}\left(\sqrt{1-\rho}-\sqrt{1-\widehat{\rho}} \right)\xrightarrow[n\rightarrow \infty]{P}0,$$
$$a_nb_n\xrightarrow[n \rightarrow \infty]{}1,$$
and
$$a_n\left(\max_{i}(|Z_i|)-b_n \right)\left(\sqrt{1-\rho}-\sqrt{1-\widehat{\rho}} \right) \xrightarrow[n \rightarrow \infty]{D}0.$$
Through the continuous mapping theorem and similar arguments for $g_2$, see that
$$(\ref{thm3eq3})\xrightarrow[n \rightarrow \infty]{D}0$$
pointwise in $z$. Then, following the same final steps from Theorem \ref{thm2}, we have our conclusion.
\end{proof}

\section{Size tables} \label{app:table}

\begin{table}[h]
\begin{tabular}{ |c||c|c|c||c|c|c||c|c|c|  }
 \hline
 \multicolumn{10}{|c|}{Empirical Size for $\rho = 0.0$} \\
 \hline
 & \multicolumn{3}{|c||}{$\alpha = 0.01$} &\multicolumn{3}{|c||}{$\alpha = 0.05$}&\multicolumn{3}{|c|}{$\alpha = 0.10$}\\
 \hline
Method & $n$=20 & $n$=100 & $n$=1000 & $n$=20 & $n$=100 & $n$=1000 & $n$=20 & $n$=100 & $n$=1000
\\\hline

 \hline
 GNPMLE   & 0.0103    &0.0099&0.0105&0.0524&0.0492&0.0470&0.1045&0.0967&0.0997\\ \hline
 HMP&   0.0107  & 0.0109   &0.0115&0.0711&0.0751&0.00895&0.1753&0.2230&0.3619\\ \hline
 HMPADJ   &0.0103 & 0.0100&  0.0104&0.0518&0.0481&0.0499&0.1008&0.0951&0.0983\\ \hline
 Bonferroni&   0.0103  & 0.0099&0.0104&0.0506&0.0474&0.0455&0.0989&0.0917&0.0952\\ 
 \hline
 F-P &0.0037&0.0044&0.0042&0.0234&0.0227&0.0233&0.0496&0.0519&0.0474\\
 \hline

\end{tabular}
\newline
\begin{tabular}{ |c||c|c|c||c|c|c||c|c|c|  }
 \hline
 \multicolumn{10}{|c|}{Empirical Size for $\rho = 0.2$} \\
 \hline
 & \multicolumn{3}{|c||}{$\alpha = 0.01$} &\multicolumn{3}{|c||}{$\alpha = 0.05$}&\multicolumn{3}{|c||}{$\alpha = 0.10$}\\
 \hline
Method & $n$=20 & $n$=100 & $n$=1000 & $n$=20 & $n$=100 & $n$=1000 & $n$=20 & $n$=100 & $n$=1000
\\\hline

 \hline
 GNPMLE   &0.0090&0.0081&0.0102&0.0504&0.0504&0.0507&0.1011&0.0965&0.0917\\ \hline
 HMP&0.0106&0.0105&0.0144&0.0716&0.0860&0.1151&0.1647&0.2064&0.2753\\ \hline
 HMPADJ&0.0100&0.0091&0.0127&0.0528&0.0589&0.0689&0.1031&0.1049&0.1159\\ \hline
 Bonferroni&0.0089&0.0078&0.0093&0.0476&0.0460&0.0439&0.0920&0.0831&0.0707\\ 
 \hline
 F-P &0.0639&0.0593&0.0101&0.1241&0.1111&0.0532&0.1681&0.1586&0.1064\\
 \hline
\end{tabular}
\newline
\begin{tabular}{ |c||c|c|c||c|c|c||c|c|c|  }
 \hline
 \multicolumn{10}{|c|}{Empirical Size for $\rho = 0.5$} \\
 \hline
 & \multicolumn{3}{|c||}{$\alpha = 0.01$} &\multicolumn{3}{|c||}{$\alpha = 0.05$}&\multicolumn{3}{|c||}{$\alpha = 0.10$}\\
 \hline
Method & $n$=20 & $n$=100 & $n$=1000 & $n$=20 & $n$=100 & $n$=1000 & $n$=20 & $n$=100 & $n$=1000
\\\hline 

 \hline
 GNPMLE   &0.0091&0.0099&0.0112&0.0460&0.0524&0.0481&0.1004&0.0941&0.1000\\ \hline
 HMP&0.0130&0.0143&0.0172&0.0671&0.0830&0.0844&0.1471&0.1609&0.1764\\ \hline
 HMPADJ&0.0120&0.0133&0.0150&0.0511&0.0607&0.0555&0.0993&0.0931&0.0905\\ \hline
 Bonferroni&0.0080&0.0070&0.0057&0.0347&0.0291&0.0194&0.0690&0.0471&0.0314\\ 
 \hline
 F-P &0.0219&0.0124&0.0102&0.0639&0.0570&0.0493&0.1182&0.0993&0.1010\\
 \hline
\end{tabular}
\newline
\begin{tabular}{ |c||c|c|c||c|c|c||c|c|c|  }
 \hline
 \multicolumn{10}{|c|}{Empirical Size for $\rho = 0.9$} \\
 \hline
 & \multicolumn{3}{|c||}{$\alpha = 0.01$} &\multicolumn{3}{|c||}{$\alpha = 0.05$}&\multicolumn{3}{|c||}{$\alpha = 0.10$}\\
 \hline
Method & $n$=20 & $n$=100 & $n$=1000 & $n$=20 & $n$=100 & $n$=1000 & $n$=20 & $n$=100 & $n$=1000
\\\hline

 \hline
 GNPMLE   &0.0112&0.0095&0.0095&0.0479&0.0502&0.0486&0.0990&0.0977&0.1067\\ \hline
 HMP&0.0119&0.0105&0.0110&0.0523&0.0554&0.0529&0.1070&0.1073&0.1152\\ \hline
 HMPADJ   &0.0112&0.0098&0.0093&0.0388&0.0408&0.0363&0.0745&0.0652&0.0601\\ \hline
 Bonferroni&0.0033&0.0014&0.0005&0.0106&0.0047&0.0008&0.0215&0.0080&0.0022\\ 
 \hline
 F-P &0.0108&0.0096&0.0094&0.0467&0.0523&0.0486&0.1009&0.0974&0.1051\\
 \hline
\end{tabular}
\caption{Empirical size for varying $\rho$, $\alpha$, and $n$. Note that \ref{gnpa} provides exact control of $\alpha$ for each possible combination of parameters, whereas other methods falter in certain combinations. 
}\label{table:sizetable}
\end{table}

\end{document}